\renewcommand{\qedsymbol}{$\blacksquare$}
\newcommand{\ket}[1]{\left|{#1}\right\rangle}
\newcommand{\bra}[1]{\left\langle{#1}\right|}
\newcommand{\braket}[2]{\langle{#1}|{#2}\rangle}
\newcommand{\ketbrad}[1]{\left|{#1}\rangle\!\langle{#1}\right|}
\newcommand{\ketbra}[2]{\left|{#1}\rangle\!\langle{#2}\right|}
\newcommand{\dslash}{\not{\text{d}}}
\newcommand{\red}{\color[rgb]{0.8,0.3,0.2}}
\newcommand{\blue}{\color[rgb]{0,0,0.6}}
\newcommand{\green}{\color[rgb]{0.0,0.7,0.0}}
\renewcommand{\l}{\left}
\renewcommand{\r}{\right}
\renewcommand{\a}{\alpha}
\renewcommand{\b}{\beta}
\renewcommand{\d}{\operatorname{d}}
\renewcommand{\t}{\perp}
\renewcommand{\S}{\rm S}
\renewcommand{\-}{\text{-}}
\renewcommand{\H}{\mathbb{H}}
\renewcommand{\L}{\mathbb{L}}
\newcommand{\D}{\mathcal{D}}
\newcommand{\E}{\mathcal{E}}
\renewcommand{\P}{\mathcal{P}}
\newcommand{\Q}{\mathcal{Q}}
\newcommand{\A}{\mathcal{A}}
\renewcommand{\url}[1]{}
\newcommand{\urlprefix}{}
\renewcommand{\href}[1]{}
\newcommand{\blk}{\color{black}}
\newcommand{\blu}{\color{blue}}
\newcommand{\kl}[1]{{\color{CHIGUSA} \small \bf [[{#1}]]}}
\definecolor{CHIGUSA}{RGB}{58,143,183}
\definecolor{AOTAKE}{RGB}{0,137,108}
\definecolor{HATOBANEZUMI}{RGB}{114,99,110}
\newcommand{\li}{\color{AOTAKE}}
\begin{document}
	
	\title{Divisibility hierarchy of open quantum systems}
	\author{Fei-Lei Xiong}
	\affiliation{Hefei National Laboratory for Physical Sciences at the Microscale, Department of Modern Physics, University of Science and Technology of China, Hefei, Anhui 230026, China}
	\affiliation{CAS Center for Excellence in QIQP and the Synergetic Innovation Center for QIQP, University of Science and Technology of China, Hefei, Anhui 230026, China}
	\author{Zeng-Bing Chen}
	\affiliation{Hefei National Laboratory for Physical Sciences at the Microscale, Department of Modern Physics, University of Science and Technology of China, Hefei, Anhui 230026, China}
	\affiliation{CAS Center for Excellence in QIQP and the Synergetic Innovation Center for QIQP, University of Science and Technology of China, Hefei, Anhui 230026, China}
	
	\begin{abstract}
In the theory of open quantum systems, divisibility of the system dynamical maps is related to memory effects in the dynamics. By decomposing the system Hilbert space as a direct sum of several Hilbert spaces, we study the relationship among the corresponding dynamical maps. It is shown that if the dynamical maps of the open system possess a chain of invariant subspaces, there exists a divisibility hierarchy for their corresponding dynamics. Two classes of examples are given for illustrating these hierarchical structures. One is the pure-dephasing dynamics, and the other is the decay dynamics. Our results offer a systematic approach to obtaining the divisibility conditions and non-Markovian witnesses for these dynamics. Moreover, as a new way of decomposing open quantum systems, it is worthy of further study.
	\end{abstract}
	
\pacs{03.65.Ca, 03.65.−w, 03.65.Yz}
	
\maketitle
	
\emph{Introduction.}---In the physical world, a quantum system is usually inevitably coupled to other quantum systems called the environment~\cite{BP02}. As a result, the dynamics of the (open) system possesses some stochastic nature. In many cases, it can be dealt with Born-Markov approximation~\cite{WM07,VA17}, i.e., the dynamics can be approximated as memoryless (Markovian)~\cite{G76,K72,L76}. In the past decades, as the experimental conditions have undergone a great progress~\cite{VA17} and the study of open quantum systems has become deeper, the memory effects have been attracting more and more interest~\cite{VA17,BLPV16,RHP14,LHW17}. 

Unlike the clear classical definition of Markovianity---a Markovian process is the process whose future evolution only depends on the present state itself rather than the trajectory to it~\cite{BP02}, the quantum counterpart is still in debate~\cite{HCLA14,LHW17}, even though there have been various definitions, measures, and witnesses of quantum non-Markovianity~\cite{RHP14}. In this paper, we study the memory effects in open quantum systems based on \emph{divisibility}~\cite{RHP10,WC08}. The definition of quantum non-Markovianity based on this concept is motivated by finding the quantum analogy to the Chapman-Kolmogorov equation, which is often employed to describe classical Markov processes~\cite{GS01}. Moreover, it relates to many measures of quantum non-Markovianity~\cite{LHW17}, such as trace distance measure~\cite{BLP09,LPB10,B12}, RHP measure~\cite{RHP10}, 
decay rates measure~\cite{HCLA14}, etc.~\cite{CM14}. 
Assume that the dynamics of a system S is described by a family of dynamical maps $\{\E(t,0),t\geq0\}$~\cite{LHW17}, where $\rho_{\S}(t)=\E(t,0)\rho_{\S}(0)$, with $\rho_{\S}(t)$ and $\rho_{\S}(0)$ denoting the system state at $t$ and $0$, respectively. 
The definition of divisibility reads as follows: If for arbitrary $t_2\geq t_1\geq 0$, there exists a CPTP map $\Q(t_2,t_1)$ satisfying
\begin{align}\label{eq-def-div}
\E(t_2,0)=\Q(t_2,t_1) \E(t_1,0)\, ,
\end{align}
$\E(t,0)$ is divisible; otherwise, it is indivisible, i.e., the dynamics is non-Markovian~\cite{RHP14}.
 
	
By the above definition, determining whether the dynamics of a system S is non-Markovian requires full knowledge of the dynamical map $\E(t,0)$. However, in general, because of the complexity of the total system, one cannot obtain the exact form of $\E(t,0)$~\cite{BP02}. Approximations are usually required to analyze divisibility of the dynamics. One standard approach beyond Born-Markov approximation is the Nakajima-Zwanzig projection operator technique~\cite{N58,Z60,P17}, through which master equation of the form
\begin{align}\label{eq-nz}
\frac{d\rho_{\S}(t)}{d t}=\int_{0}^{t}d \tau \mathcal{K}(t-\tau)\rho_{\S}(\tau)\,
\end{align}
can be derived. The superoperator $\mathcal{K}(t-\tau)$ is usually called memory kernel. Even though Eq.~\eqref{eq-nz} possesses a memory kernel, 
it can always be transformed to an equation in the time-local form~\cite{CS79,STH77,ACH07}. 
The time-local equation is associated with a family of \emph{decoherence matrices} $\{\bm{d}(t), t\geq 0\}$
~\cite{HCLA14}. (In the following, we would abbreviate $\bm{d}(t)$ to $\bm{d}$.) Only if $\bm{d}$ is positive-semidefinite, i.e., $\bm{d}\geq0$, the dynamics is divisible. Otherwise, it is indivisible.

In this paper, we decompose the system Hilbert space as direct sum of several subspaces and investigate the properties of their corresponding dynamics. It is shown that when $\E(t,0)$ has a chain of invariant subspaces, the divisibility conditions form a hierarchy. We find two classes of dynamics satisfying the condition: one is the pure-dephasing dynamics; the other is decay dynamics. We give one explicit example for each situation and analyze their properties. Furthermore, we consider the more general case that the ``subdynamics'' is not trace-preserving. At last, we leave some open questions for the further study of open quantum systems in this way.
	
\emph{Theoretical structure.}---
The Hilbert space of a quantum system $\mathbb{H}_{\S}$ is usually divided as $\mathbb{H}_{\S}=\mathbb{H}_{ 1}\otimes \mathbb{H}_{ 2}\otimes\cdots\otimes \mathbb{H}_{ n}$, where $\mathbb{H}_i$ denotes the Hilbert space associated to the $i$th degree of freedom of the system. Here, we consider another possible decomposition: $\mathbb{H}_{\rm S}$ decomposed as the direct sum of several Hilbert spaces~\cite{A97}, i.e., $\mathbb{H}_{\S}=\mathbb{H}_{\a_1}\oplus\mathbb{H}_{\a_2}\oplus\cdots\oplus\mathbb{H}_{\a_n}$.


To simplify the discussion, we first consider $\mathbb{H}_{\rm S}=\mathbb{H}_{\a_1}\oplus\mathbb{H}_{\a_2}$ with $\operatorname{dim}\left( \mathbb{H}_{\a_1} \right) <\infty$. Then any operator $X_{\S}$ of the system can be expressed in the form $X_{\rm S}=\left(\begin{array}{c c}
	{X_{\a_1\a_1}}&{X_{\a_1 \a_2}} \\
	{X_{\a_2 \a_1}}&{X_{\a_2\a_2}} \\
	\end{array}\right)$, in which $X_{\a_i \a_j}$ denotes a matrix block.
All $X_{\S}$ and $X_{\a_1\a_1}$ constitute the linear spaces $\L(\H_{\S})$ and $\L(\H_{\a_1})$, respectively~\cite{A97}. Apparently, $\L_{\a_1}\subset \L_{\S}$ (Hereafter, we shall abbreviate $\L(\H_{\S})$ and $\L(\H_{\a_i})$ to $\L_{\S}$ and $\L_{\a_i}$, respectively.). 
There exists a linear space ${\L}_{\a_1^{\perp}}$ satisfying ${\L}_{\rm S}={\L}_{\a_1}\oplus {\L}_{\a_1^{\perp}}$. As a result, a superoperator of the system $\E:{\L}_{\rm S} \rightarrow {\L}_{\rm S}$ can be decomposed as $\E=\mathcal{E}_{\a_1\a_1}\oplus{\mathcal{E}_{\a_1 \a_1^{\perp}}}\oplus{\mathcal{E}_{\a_1^{\perp} \a_1}}\oplus{\mathcal{E}_{\a_1^{\perp}\a_1^{\perp}}}$, where
\begin{subequations}\label{eq-sub}
\begin{align}
\E_{\a_1\a_1}&=\mathcal{P}_{\a_1 \a_1} \E \mathcal{P}_{\a_1 \a_1}\, ,\label{eq-def-sub1}\\
\E_{\a_1\a_1^{\perp}}&=\mathcal{P}_{\a_1 \a_1} \E \mathcal{P}_{\a_1^{\perp} \a_1^{\perp}}\, ,\\
\E_{\a_1^{\perp}\a_1}&=\mathcal{P}_{\a_1^{\perp} \a_1^{\perp}} \E \mathcal{P}_{\a_1 \a_1}\, ,\\
\E_{\a_1^{\perp}\a_1^{\perp}}&=\mathcal{P}_{\a_1^{\perp} \a_1^{\perp}} \E \mathcal{P}_{\a_1^{\perp} \a_1^{\perp}}\, ,
\end{align}
\end{subequations}
in which $\mathcal{P}_{\a_1 \a_1}$ ($\mathcal{P}_{\a_1^{\perp} \a_1^{\perp}}$) denotes the projection operator of the linear space $\L_{\a_1}$ ($\L_{\a_1^{\perp}}$).  


		
Apply the above decomposition to the dynamical map of an open quantum system. Given $\E(t,0)$, one can construct $\mathcal{E}_{\a_1\a_1}(t,0)$ according to Eq.~\eqref{eq-sub}. 
Because $\E(t,0)$ is CPTP, $\E_{\a_1 \a_1}(t,0)$ is completely positive (CP)~\footnote{See the Appendix.}. Therefore, the family $\l\{\E_{\a_1\a_1} (t,0)  \r\}$ can be interpreted as the ``pseudodynamics'' of $\a_1$. By ``pseudodynamics'', we mean that $\l\{\E_{\a_1\a_1} (t,0)  \r\}$ generates a family of positive-semidefinite matrices $\{\rho_{\a_1 \a_1}(t)=\E_{\a_1\a_1}(t,0)\rho_{\a_1 \a_1}(0) \}$ for any density matrix $\rho_{\a_1 \a_1}(0)$.

For some systems, there exist decompositions ensuring that $\L_{\a_1}$ is a $\E(t,0)$-invariant subspace for $t\geq0$~\cite{RN12}. This ensures that $\mathcal{E}_{\a_1\a_1}(t,0)$ is trace-preserving for $t\geq0$~\footnotemark[\value{footnote}]. Thus, all $\mathcal{E}_{\a_1\a_1}(t,0)$ are CPTP, and the family $\l\{\E_{\a_1\a_1} (t,0)\r\}$ describes a conventional quantum dynamics. Assuming that $\E(t,0)$ is always invertible, and the inverse is denoted as $\E^{-1}(t,0)$, the divisibility of $\mathcal{E}_{\a_1\a_1}(t,0)$ and that of $\E(t,0)$ satisfy the following theorem.

\newtheorem{theorem}{Theorem}
\begin{theorem}\label{theorem1}
If $\E(t,0)$ is divisible and $\E_{\a_1 \a_1}(t,0)$ are all CPTP, then $\E_{\a_1 \a_1}(t,0)$ is also divisible.
\end{theorem}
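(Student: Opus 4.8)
The plan is to build the intertwining map for the compressed dynamics by compressing the CPTP map $\Q(t_2,t_1)$ that witnesses divisibility of the full dynamics, and then to check that this compression is again CPTP. First I would exploit the assumed $\E(t,0)$-invariance of $\L_{\a_1}$. This means $\E_{\a_1^{\perp}\a_1}(t,0)=\P_{\a_1^{\perp}\a_1^{\perp}}\E(t,0)\P_{\a_1\a_1}=0$, i.e. $\E(t,0)$ is block upper-triangular on $\L_{\S}=\L_{\a_1}\oplus\L_{\a_1^{\perp}}$, so that $\E(t,0)\P_{\a_1\a_1}=\P_{\a_1\a_1}\E(t,0)\P_{\a_1\a_1}=\E_{\a_1\a_1}(t,0)$; the compression and the genuine restriction to $\L_{\a_1}$ coincide. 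Since $\E(t,0)$ is invertible, the triangular structure forces $\E_{\a_1\a_1}(t,0)$ to be injective on the finite-dimensional space $\L_{\a_1}$ (if $\E_{\a_1\a_1}(t,0)X=0$ for $X\in\L_{\a_1}$ then $\E(t,0)X=\E_{\a_1\a_1}(t,0)X=0$, hence $X=0$), and therefore invertible on $\L_{\a_1}$ as well; I will need this below.

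Next I would compress the divisibility relation $\E(t_2,0)=\Q(t_2,t_1)\E(t_1,0)$ with $\P_{\a_1\a_1}$ on both sides, obtaining $\E_{\a_1\a_1}(t_2,0)=\P_{\a_1\a_1}\Q(t_2,t_1)\E(t_1,0)\P_{\a_1\a_1}$. Because $\E(t_1,0)\P_{\a_1\a_1}=\P_{\a_1\a_1}\E(t_1,0)\P_{\a_1\a_1}$ by invariance, I may insert a projector between $\Q(t_2,t_1)$ and $\E(t_1,0)$ at no cost, and, writing $\Q_{\a_1\a_1}(t_2,t_1):=\P_{\a_1\a_1}\Q(t_2,t_1)\P_{\a_1\a_1}$, this gives the candidate factorization $\E_{\a_1\a_1}(t_2,0)=\Q_{\a_1\a_1}(t_2,t_1)\E_{\a_1\a_1}(t_1,0)$. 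It is precisely the invariance of $\L_{\a_1}$ that lets the compression of a product factor into the product of compressions here.

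It then remains to prove that $\Q_{\a_1\a_1}(t_2,t_1)$ is CPTP. Complete positivity is immediate: $\Q(t_2,t_1)$ is CP and $\Q_{\a_1\a_1}(t_2,t_1)$ is its compression, so the very argument used in the Appendix to deduce CP of $\E_{\a_1\a_1}$ from CP of $\E$ applies verbatim. The trace-preserving property is the part I expect to be the main obstacle, and it is where invertibility enters. For any $X\in\L_{\a_1}$ set $Y=\E_{\a_1\a_1}(t_1,0)X$; since $\E_{\a_1\a_1}(t_1,0)$ is invertible, $Y$ ranges over all of $\L_{\a_1}$ as $X$ does. Using the factorization together with the hypothesis that both $\E_{\a_1\a_1}(t_1,0)$ and $\E_{\a_1\a_1}(t_2,0)$ are trace-preserving,
\begin{align}
\operatorname{Tr}\!\l[\Q_{\a_1\a_1}(t_2,t_1)\,Y\r]
&=\operatorname{Tr}\!\l[\E_{\a_1\a_1}(t_2,0)X\r]\notag\\
&=\operatorname{Tr}(X)=\operatorname{Tr}(Y)\, ,
\end{align}
so $\Q_{\a_1\a_1}(t_2,t_1)$ preserves the trace on all of $\L_{\a_1}$. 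Hence it is CPTP and witnesses the divisibility of $\E_{\a_1\a_1}(t,0)$, which completes the argument. The only genuinely delicate points are the factorization of the compressed product (resolved by invariance) and the trace condition on the witness (resolved by invertibility plus the CPTP hypothesis on the $\E_{\a_1\a_1}(t,0)$).
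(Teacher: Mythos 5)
Your proof is correct and follows the same overall strategy as the paper --- compress the intertwiner $\Q(t_2,t_1)$ to $\L_{\a_1}$ and verify that the compression is CPTP --- but two of the three supporting steps are carried out by genuinely different means. The paper takes the canonical intertwiner $\Q(t_2,t_1)=\E(t_2,0)\E^{-1}(t_1,0)$, computes the block inverse of the upper-triangular $\E$ explicitly (Eq.~\eqref{eq-inverse}), uses it to show $\Q_{\a_1^{\perp}\a_1}(t_2,t_1)=0$ (Eq.~\eqref{eqn:cona}), and only then deduces both the factorization and the trace preservation of $\Q_{\a_1\a_1}$, the latter from trace preservation of the full $\Q$. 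You instead obtain the factorization by compressing the abstract identity $\E(t_2,0)=\Q(t_2,t_1)\E(t_1,0)$ and absorbing a projector via $\E(t_1,0)\P_{\a_1\a_1}=\P_{\a_1\a_1}\E(t_1,0)\P_{\a_1\a_1}$, and you get trace preservation from the surjectivity of $\E_{\a_1\a_1}(t_1,0)$ on the finite-dimensional $\L_{\a_1}$ combined with the TP hypothesis on $\E_{\a_1\a_1}(t_1,0)$ and $\E_{\a_1\a_1}(t_2,0)$, never invoking the explicit inverse formula or the vanishing of $\Q_{\a_1^{\perp}\a_1}$. This buys a shorter and somewhat more robust argument: it applies to any CPTP intertwiner witnessing divisibility, not just the canonical one, and it sidesteps the block-inverse computation relegated to the appendix. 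What it gives up is the structural by-product the paper extracts along the way, namely that the intertwiner itself is block-triangular with $\Q_{\a_1\a_1}(t_2,t_1)=\E_{\a_1\a_1}(t_2,0)\E_{\a_1\a_1}^{-1}(t_1,0)$ as in Eq.~\eqref{eq-div-alp}, which is the form used later in the examples. One presentational point: the theorem's stated hypothesis is that all $\E_{\a_1\a_1}(t,0)$ are CPTP, not that $\L_{\a_1}$ is invariant; you should cite the equivalence of the two (Lemma~\ref{lemma2} of the appendix) before relying on invariance, though this is a one-line fix rather than a gap.
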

\begin{proof}
According to the decomposition $\H_{\S}=\H_{\a_1}\oplus\H_{\a_2}$, the system density matrix can be written in the form $\rho_{\S}=\left(\begin{array}{c c}
{\rho_{\a_1\a_1}}&{\rho_{\a_1 \a_2}} \\
{\rho_{\a_2 \a_1}}&{\rho_{\a_2\a_2}} \\
\end{array}\right)$. 

When $\rho_{\S}=\rho_{\a_1\a_1}$, i.e., ${\rho_{\a_1 \a_2}}={\rho_{\a_2 \a_1}}={\rho_{\a_2\a_2}}=0$, following Eq.~\eqref{eq-sub},
\begin{align}\label{eq-the-1}
\rho_{\S}'=\E\rho_{\S}=(\E_{\a_1\a_1}+\E_{\a_1^{\t} \a_1}) \rho_{\a_1\a_1} =\left(\begin{array}{c c}
{\rho_{\a_1 \a_1}'}&{\rho_{\a_1 \a_2}'} \\
{\rho_{\a_2 \a_1}'}&{\rho_{\a_2\a_2}'} \\
\end{array}\right)\,,   
\end{align}
where $\rho_{\a_1 \a_1}'=\mathcal{E}_{\a_1\a_1} \rho_{\a_1\a_1}$.
Assume $\E$ and $\E_{\a_1\a_1}$ are CPTP maps. 
One can easily deduce that $\rho_{\a_2\a_2}'= 0$. Moreover, $\rho_{\a_1 \a_2}'=\rho_{\a_2 \a_1}'=0$ and $\rho_{\S}'=\rho_{\a_1 \a_1}'$~\cite{A97}. That is, $\mathcal{E} \rho_{\a_1\a_1}=\mathcal{E}_{\a_1\a_1} \rho_{\a_1\a_1}$. Following this and Eq.~\eqref{eq-the-1}, $\mathcal{E}_{\a_1^{\perp}\a_1}=0$. In this case, one can prove that $\mathcal{E}_{\a_1\a_1}$ and $\mathcal{E}_{\a_1^{\perp}\a_1^{\perp}}$ are invertible~(See the Appendix), and the inverse of $\E$ can be formulated as
\begin{align}\label{eq-inverse}
\mathcal{E}^{-1}=\mathcal{E}_{\a_1\a_1}^{-1}-\mathcal{E}_{\a_1\a_1}^{-1}\mathcal{E}_{\a_1 \a_1^{\perp}}\mathcal{E}_{\a_1^{\perp}\a_1^{\perp}}^{-1}+\mathcal{E}_{\a_1^{\perp}\a_1^{\perp}}^{-1}\, .
\end{align}
		
Because the dynamical map $\E(t,0)$ is divisible, 
\begin{align}
\mathcal{Q}(t_2,t_1)=\mathcal{E}(t_2,0)\mathcal{E}^{-1}(t_1,0)\
\end{align}
is a CPTP map. Thus, $\mathcal{Q}_{\a_1\a_1}(t_2,t_1)$ is CP~\footnotemark[\value{footnote}]. According to Eq.~\eqref{eq-inverse},
\begin{align}
\mathcal{Q}_{\a_1^{\perp}\a_1}(t_2,t_1)&=\mathcal{P}_{{\a_1}^{\t}} \E(t_2,0) \E^{-1}(t_1,0)  \mathcal{P}_{\a_1 \a_1} \nonumber\\
&=\mathcal{E}_{\a_1^{\perp} \a_1}(t_2,0) \mathcal{E}_{\a_1\a_1}^{-1}(t_1,0) \nonumber \\
&=0\, .\label{eqn:cona}
\end{align}
Thus, $\forall\, X_{\a_1 \a_1}\in \L_{\a_1}$,
\begin{align}
\operatorname{Tr}\l(X_{\a_1\a_1}\r)=\operatorname{Tr}(\mathcal{Q}(t_2,t_1)X_{\a_1 \a_1})= \operatorname{Tr}(\mathcal{Q}_{\a_1\a_1}(t_2,t_1)X_{\a_1 \a_1})\,,
\end{align}
i.e., $\mathcal{Q}_{\a_1\a_1}(t_2,t_1)$ is trace-preserving. Therefore, $\mathcal{Q}_{\a_1\a_1}(t_2,t_1)$ is CPTP. Explicitly,
\begin{align}
\mathcal{Q}_{\a_1\a_1}(t_2,t_1)&=\mathcal{P}_{\a_1 \a_1} \E(t_2,0) \E^{-1}(t_1,0)  \mathcal{P}_{\a_1 \a_1} \nonumber\\
&=\mathcal{E}_{\a_1\a_1}(t_2,0)\mathcal{E}_{\a_1\a_1}^{-1}(t_1,0)\, .\label{eq-div-alp}
\end{align}
To arrange Eq.~\eqref{eq-div-alp} in another way, 
\begin{align}\label{eq-comp-alpha}
\mathcal{E}_{\a_1\a_1}(t_2,0)=\mathcal{Q}_{\a_1\a_1}(t_2,t_1)\mathcal{E}_{\a_1\a_1}(t_1,0)\, .
\end{align}
That is, $\E_{\a_1 \a_1}(t,0)$ is divisible.
\end{proof}

Theorem~\ref{theorem1} offers us a method of finding witnesses of non-Markovianity. That is, when $\E_{\a_1 \a_1}(t,0)$ is indivisible, $\E(t,0)$ must be indivisible, i.e., the system dynamics is non-Markovian~\cite{RHP14}. 

Generally, there may be more than one non-trivial invariant subspaces of $\E(t,0)$. Correspondingly, we have the following corollary.	
\newtheorem{corollary}{Corollary}
\begin{corollary}\label{coro1}
Suppose there exists a chain ${\L}_{1}\subset {\L}_{2}\subset \cdots\subset {\L}_{n}\subset {\L}_{\rm S}$ of invariant subspaces of $\E(t,0)$ ($\forall t\geq 0$), where ${\L}_{k}=\H_{k}\times\H_{k}$, with $\H_k$ denoting some Hilbert space~\cite{BP02}. 
Then, their divisibility conditions form a hierarchy, i.e., the divisibility of the dynamics corresponding to a larger space implies the divisibility corresponding to a smaller space.
\end{corollary}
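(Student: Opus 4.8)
The plan is to obtain the corollary by iterating Theorem~\ref{theorem1} once for each link of the chain. Since logical implication is transitive, it suffices to establish the single descent step: for any two adjacent members $\L_{k}\subset\L_{k+1}$, divisibility of the dynamics restricted to $\L_{k+1}$ forces divisibility of the dynamics restricted to $\L_{k}$. Writing $\E^{(k)}(t,0):=\P_{\L_{k}}\,\E(t,0)\,\P_{\L_{k}}$ for the block of $\E(t,0)$ supported on $\L_{k}$, the full map is $\E^{(\S)}=\E$ and the $\E^{(1)},\dots,\E^{(n)}$ are the intermediate subdynamics; the hierarchy then amounts to chaining the descent step down from $\E^{(\S)}$ through $\E^{(n)},\E^{(n-1)},\dots$ to $\E^{(1)}$.

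First I would record the nesting identity for the projectors. Because $\L_{k}\subset\L_{k+1}$, they satisfy $\P_{\L_{k}}\P_{\L_{k+1}}=\P_{\L_{k+1}}\P_{\L_{k}}=\P_{\L_{k}}$. This yields the compatibility relation
\begin{align}
\P_{\L_{k}}\,\E^{(k+1)}\,\P_{\L_{k}}=\P_{\L_{k}}\P_{\L_{k+1}}\E\,\P_{\L_{k+1}}\P_{\L_{k}}=\P_{\L_{k}}\E\,\P_{\L_{k}}=\E^{(k)}\,,
\end{align}
so the block of the reduced map $\E^{(k+1)}$ on $\L_{k}$ is exactly $\E^{(k)}$, not some dressed version involving $\L_{k+1}\ominus\L_{k}$. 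Next I would check that $\L_{k}$ is invariant under $\E^{(k+1)}$ itself: for $X\in\L_{k}\subseteq\L_{k+1}$ one has $\E^{(k+1)}X=\P_{\L_{k+1}}\E X$, and since $\L_{k}$ is $\E$-invariant, $\E X\in\L_{k}\subseteq\L_{k+1}$, whence $\E^{(k+1)}X=\E X\in\L_{k}$. As in the discussion preceding Theorem~\ref{theorem1}, this invariance makes each $\E^{(k)}(t,0)$ trace-preserving while complete positivity of $\E(t,0)$ makes it CP, so every $\E^{(k)}(t,0)$ is CPTP; moreover the off-diagonal block analogous to $\E_{\a_1^{\t}\a_1}$ vanishes, and the inverse formula Eq.~\eqref{eq-inverse} shows $\E^{(k)}(t,0)$ inherits invertibility from $\E(t,0)$.

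With these facts in place I would apply Theorem~\ref{theorem1} verbatim, with $\L_{k+1}$ playing the role of the total space $\L_{\S}$, $\L_{k}$ playing the role of the invariant subspace $\L_{\a_1}$, and $\E^{(k+1)}$ in place of $\E$. The hypotheses hold: $\E^{(k+1)}(t,0)$ is divisible by assumption, $\E^{(k)}(t,0)=\P_{\L_{k}}\,\E^{(k+1)}\,\P_{\L_{k}}$ is CPTP, and $\E^{(k+1)}(t,0)$ is invertible. The theorem then produces CPTP propagators satisfying $\E^{(k)}(t_2,0)=\Q(t_2,t_1)\,\E^{(k)}(t_1,0)$, i.e., $\E^{(k)}(t,0)$ is divisible. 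Composing this descent step along the whole chain gives the stated hierarchy.

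The routine parts are the projector algebra and the CPTP/invertibility bookkeeping, which simply reuse the arguments already given for Theorem~\ref{theorem1}. The point requiring genuine care—the main obstacle—is the relative-invariance step: one must verify that $\L_{k}$ is invariant under the \emph{reduced} map $\E^{(k+1)}$ and not merely under the ambient $\E$, and that the $\L_{k}$-block of $\E^{(k+1)}$ is literally $\E^{(k)}$. This is precisely what the nesting identity $\P_{\L_{k}}\P_{\L_{k+1}}=\P_{\L_{k}}$ secures, and it is the hinge that lets Theorem~\ref{theorem1} be applied one level at a time so that the implications compose cleanly down the chain.
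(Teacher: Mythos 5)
Your proposal is correct and matches the argument the paper intends: the corollary is stated without an explicit proof precisely because it is meant to follow by applying Theorem~\ref{theorem1} link by link down the chain, which is exactly what you do. Your explicit verification of the projector nesting identity and of the relative invariance of $\L_{k}$ under the reduced map $\E^{(k+1)}$ supplies the bookkeeping the paper leaves implicit, and nothing in it goes beyond or against the paper's reasoning.
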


	
	
There are two classes of dynamics whose invariant subspaces can be found easily. One is the pure-dephasing dynamics and the other is the decay dynamics. In the following, we shall discuss these two classes of dynamics.

\emph{Example for the pure-dephasing case.}---Consider the boson-boson pure-dephasing model proposed in Ref.~\cite{XLC18}. The dynamical map $\E(t,0)$ in the interaction picture satisfies~\footnotemark[\value{footnote}]
\begin{align}
\E(t,0)\sum_{m,n=0}^{\infty}\rho_{mn} \ketbra{m}{n}=\sum_{m,n=0}^{\infty}\rho_{mn} \operatorname{e}^{\eta((m-n)t)} \ketbra{m}{n}\, ,
\end{align}	
where $\ket{i}$ and $\ket{j}$ stand for number states~\cite{WM07} and $\eta(t)=\eta^*(-t)$. 
Through Theorem~\ref{theorem1}, the divisibility condition of the dynamics can be derived and a family of non-Markovianity witnesses can be obtained.

Define 
\begin{align}\label{eq-def-Lk}
{\L}_k:=\operatorname{span}\l\{\ketbra{m}{n}:m,n\in \{0,1,\cdots,k-1\}\r\}\,,
\end{align}
where $k\geq 2$.
Then one can derive its corresponding pseudodynamical maps $\E_k(t,0)$ satisfying
\begin{align}
\E_k(t,0)\sum_{m,n=0}^{k-1} X_{mn} \ketbra{m}{n}=\sum_{m,n=0}^{k-1}X_{mn} {e}^{\eta((m-n)t)} \ketbra{m}{n}\, ,
\end{align}
where $X_{mn}\in \mathbb{C}$ and $\eta(t)$ is a dephasing function defined in Ref.~\cite{XLC18} (or see~\footnotemark[\value{footnote}]). Therefore, the space $\L_k$ is an invariant subspace of the system dynamics. 

The effective equation of motion corresponding to $\L_k$ reads
\begin{align}\label{eq-dep-n}
\dot{\rho}_k=\sum_{m,n=0}^{k-1} \dot{\eta}((m-n)t)\ketbra{m}{m} \rho_k \ketbra{n}{n}\, ,
\end{align}
where $\rho_k$ is a density matrix in $\L_k$.
Equation~\eqref{eq-dep-n} can be transformed to the standard form~\cite{HCLA14}
\begin{align}\label{eq:semi-cano}
\dot{\rho}_k=-\mathrm{i}[H_{\S}', \rho_k]+\sum_{p,q=1}^{k^2-1} d_{pq}\Big(G_p \rho_k G_q-\frac{1}{2}\l\{G_q G_p, \rho_k\r\}\Big)\, ,
\end{align}
in which $H_{\S}'=H_{\S}'^{\dagger}$; $\l\{G_p: p\in \{0,1,\cdots,k^2-1\}\r\}$ satisfies
\begin{align}
G_0=\frac{\mathbb{I}_k}{\sqrt{k}}\, , \: \: \: G_p=G_p^{\dagger}\, ,\: \: \: \operatorname{Tr}\l[G_p G_q\r]=\delta_{pq}\, ;
\end{align}
and
\begin{align}\label{eq-dec-mat-d}
d_{pq}=\sum_{m,n=0}^{k-1} \dot{\eta}((m-n)t) \bra{m} G_p \ket{m} \bra{n} G_q \ket{n}\, 
\end{align}
is the element of the {decoherence matrix} $\bm{d}_k$. (Hereafter, we shall abbreviate the time-dependence of  $\bm{d}_k(t)$, $\bm{d}_k^{\rm B}(t)$ and $\bm{D}_k(t)$.)

Choose a representation of $G_p$ and sort them as~\cite{BK08}
\begin{align}
\begin{split}
G_l^{\rm d}&=\frac{\operatorname{diag}\{1,\cdots,1,-l,0,\cdots,0\}}{\sqrt{l(l+1)}}   \qquad (1\leq l\leq k-1) \\
G_{mn}^{\rm s}&=\frac{1}{\sqrt{2}}(\ketbra{m}{n}+\ketbra{n}{m})   \qquad\quad (0\leq n < m\leq k-1) \\
G_{mn}^{\rm a}&=\frac{i}{\sqrt{2}}(\ketbra{m}{n}-\ketbra{n}{m})  \qquad\quad (0\leq n < m\leq k-1)
\end{split}
\end{align}
where the superscripts stand for ``diagonal'', ``symmetric'', and ``anti-symmetric'', respectively. Sort $G_l^{\rm d}$ by $l$, and  $G_{mn}^{\rm s}$ ($G_{mn}^{\rm a}$) by $m$ and $n$ in the ascending order, respectively. Then in $\bm{d}_k$, only the upper-left $(k-1) \times (k-1)$ block $\bm{d}^{\rm B}_k$ is non-trivial (all the other matrix elements are 0). Thus, $\bm{d}_k\geq 0$ is equivalent to $\bm{d}^{\rm B}_k\geq 0$. As a result, $\E_k(t,0)$ is divisible if and only if $\bm{d}^{\rm B}_k\geq 0$. Though complicated, $\bm{d}^{\rm B}_k$ can be transformed to a simpler form by the transformation $V_k\bm{d}^{\rm B}_k V_k^{\dagger}$, where $V_k$ is an invertible square matrix~\cite{H12}. By choosing $V_k$ in~\footnotemark[\value{footnote}], one obtains
\begin{align}\label{eq-cons-diag-mat}
\bm{D}_k=\left(\begin{array}{ccccc}
T_0& T_{-1} & T_{-2} &\cdots& T_{2-k}\\
T_1& T_0&T_{-1} &\ddots &\vdots\\
T_2& T_1&T_{0} &\ddots &T_{-2}\\
\vdots& \ddots&\ddots &\ddots &T_{-1}\\
T_{k-2}& \cdots&T_2 &T_1 &T_0\\
\end{array}\right),
\end{align}
in which $T_j=-\dot{\eta}((j+1)t)+2\dot{\eta}(jt)-\dot{\eta}((j-1)t)$. Thus, the divisibility condition becomes $\bm{D}_k\geq 0$, or equivalently, $\forall j\in\{2,3,\cdots,k \}$, $|\bm{D}_j|\geq 0$~\cite{A97}. 
	
Construct a chain of $\E(t,0)$-invariant subspaces ${\L}_{2}\subset {\L}_{3}\subset\cdots \subset {\L}_{\infty}={\L}_{\S}$, the corresponding divisibility conditions read $\bm{D}_2\geq0$, $\bm{D}_3\geq0$, $\cdots$, $\bm{D}_\infty\geq0$, respectively. Following Eq.~\eqref{eq-cons-diag-mat}, one has
\begin{align}
\bm{D}_m\geq0 \Longrightarrow \bm{D}_n\geq0 \quad(m>n)\,.
\end{align}
Therefore, the divisibility conditions form a hierarchy.

	
When $k=2$, $\bm{D}_k=[T_0]_{1\times1}$, with $T_0=-2 \operatorname{Re}\{\dot{\eta}(t)\}$. The divisibility condition is $\operatorname{Re}\{\dot{\eta}(t)\}\leq 0$. From the perspective of the density matrix, it reveals that the monotonic decay of the off-diagonal matrix elements guarantees the divisibility of two-level pure-dephasing systems. 
	
When $k>2$, $\bm{D}_k$ depends on both the real and the imaginary part of $\dot{\eta}(t)$. Consequently, the principal minors are also related to $\operatorname{Im}\l\{\dot{\eta}(t)\r\}$. That is, the monotonic decay of the off-diagonal matrix elements cannot guarantee the divisibility of the dynamics. 
	

By induction, $\E(t,0)$ is divisible if and only if ${\bm {D}}_{\infty}\geq 0$, or equivalently, $|\bm{D}_k|\geq 0$ ($k\geq2$). The spectrum of ${\bm{D}}_{\infty}$ is determined by the domain of the series $\sum_{k=-\infty}^{\infty}T_k \operatorname{e}^{i k \lambda}$ ($\lambda\in \mathbb{R}$)~\cite{G06}. Consequently, $\D_{\S}$ is divisible if and only if
\begin{align}
\sum_{k=-\infty}^{\infty}\dot{\eta}(k t) \operatorname{e}^{i k \lambda}\geq 0  \quad (\lambda\in \mathbb{R})\, .
\end{align}

Similar method can also be applied to other pure-dephasing dynamics. Usually, a set of non-Markovianity witnesses can be obtained and an exact divisibility condition can be derived. 
	
\emph{Example for the decay case.}---Consider an $N$-level system whose Hilbert space $\H_{\S}$ has an orthonormal basis $\{\ket{0},\ket{1},\cdots,\ket{N-1}\}$. The evolution of the system density matrix satisfies
\begin{align}\label{eq-decay}
\dot{\rho}_{\S}=\sum_{k=1}^{N-1}\gamma_k\l(-\frac{1}{2}\l\{\ketbra{k}{k}\!,\,\rho_{\S}\r\} +\bra{k}\!\rho_{\S}\!\ket{k}\rho^{(k)}\r)\,,
\end{align}
where $\gamma_k$ denotes 
decay rates which are allowed to be negative; $\rho^{(k)}$ is a density matrix in ${\L}_k$ (See Eq.~\eqref{eq-def-Lk} for the definition); $\gamma_k$ and $\rho^{(k)}$ are generally time-dependent. 
	
By spectrum decomposition, $\rho^{(k)}=\sum_{j=0}^{k-1}p_j^{(k)}\ketbra{k_j}{k_j}$, where $p_j^{(k)}\geq 0$, and $\ket{k_j}$ ($j\in\{0,1,\cdots,k-1\}$) form a time-dependent orthonormal basis of $\H_{k}=\operatorname{span}\{\ket{0}, \ket{1}, \cdots, \ket{k-1}\}$. Then, Eq.~\eqref{eq-decay} can be transformed to
\begin{align}\label{eq-decay-can}
\dot{\rho}_{\S}=\sum_{k=1}^{N-1}\sum_{j=0}^{k-1} p_j^{(k)} \gamma_k  \l( \sigma_{kj}^{-} \rho_{\S} \sigma_{kj}^{+}-\frac{1}{2} \l\{\sigma_{kj}^{+}\sigma_{kj}^{-}, \rho_{\S} \r\} \r)\,,
\end{align} 
where $\sigma_{kj}^{-} =\ketbra{k_j}{k}$ and $\sigma_{kj}^{+} =\ketbra{k}{k_j}$. 
Equation~\eqref{eq-decay-can} is in a {canonical form} and the divisibility condition reads $p_j^{(k)} \gamma_k\geq0$. Equivalently, the condition can also be expressed as $\gamma_k\geq0$ ($\forall k \in \{1,2,\cdots,N-1\}$). 

Now let us investigate the divisibility hierarchy. Let us prove that $\L_k$ are invariant spaces first. Denote the system's probability of being in $\ket{k}$ as $p_{k}$. Following Eq.~\eqref{eq-decay}, the time-evolution of the probabilities satisfy 
\begin{subequations}\label{eq-pro-evo}
	\begin{align}
	\dot{p}_{N-1}&=-\gamma_{N-1} p_{N-1}\, ,\label{}\\
	\dot{p}_{N-2}&=\rho^{(N-1)}_{N-2,N-2} \gamma_{N-1} p_{N-1}-\gamma_{N-2} p_{N-2}\, ,\label{}\\
    &\;\;\vdots \, \nonumber \\
	\dot{p}_{0}&=\rho^{(N-1)}_{0,0} \gamma_{N-1} p_{N-1}+\rho^{(N-2)}_{0,0} \gamma_{N-2} p_{N-2}+\cdots \nonumber
	\\ 
	&\quad+\rho^{(1)}_{0,0} \gamma_{1} p_{1}\, .\label{}
	\end{align}
\end{subequations}
Therefore, if $p_{N-1}(0)=0$, $p_{N-1}(t)\equiv0$; if $p_{N-1}(0)=p_{N-2}(0)=0$, $p_{N-1}(t)=p_{N-2}(t)\equiv0$; and so forth. 
That is, if $\rho_{\S}(0)\in {\L}_{N-1}$, $\rho_{\S}(t)\in {\L}_{N-1}$; if $\rho_{\S}(0)\in {\L}_{N-2}$, $\rho_{\S}(t)\in {\L}_{N-2}$; and so forth. So the maps $\E_k(t,0)$ corresponding to $\L_k$ are trace-preserving, thus all $\L_k $ are $\E(t,0)$-invariant for $t\geq0$~\footnotemark[\value{footnote}]. 
Equivalently, ${\L}_2\subset {\L}_3 \subset \cdots \subset {\L}_N$ form a chain of subspaces, and their divisibility conditions form a hierarchy.

In this example, the divisibility conditions of the subdynamics can be derived explicitly. The equation of motion corresponding to $\L_n$ reads
\begin{align}\label{eq-decay-m}
\dot{\rho}_n=\sum_{k=1}^{n-1}\sum_{i=0}^{k-1} p_i^{(k)} \gamma_k  \l( \sigma_{ki}^{-} \rho_n \sigma_{ki}^{+}-\frac{1}{2} \l\{\sigma_{ki}^{+}\sigma_{ki}^{-}, \rho_n \r\} \r)\,.
\end{align}
The equation is in the canonical form. Thus, the divisibility condition can be easily derived, which reads $\gamma_k\geq 0$ ($\forall k\in \{1,2,\cdots,n\}$). Obviously, the conditions for all the $n$ form a hierarchy.

\emph{Discussion on the non-trace-preserving subdynamics.}---For some decomposition $\H_{\S}=\H_{{\a_1}}\oplus\H_{\b}$, $\mathcal{E}_{{\a_1}{\a_1}}(t,0)$ is not trace-preserving. Thus, $\D_{{\a_1}}$ is not a legitimate quantum dynamics. In this case, if $\D_{\S}$ is divisible, we have the following proposition. 
\newtheorem{proposition}{Proposition}
\begin{proposition}\label{prop1}
Suppose the system dynamics is divisible but $\L_{{\a_1}}$ is not an invariant space. Then, in general, 
\begin{align}\label{eq-pro}
\mathcal{E}_{{\a_1}{\a_1}}(t_2,0)\neq \mathcal{Q}_{{\a_1}{\a_1}}(t_2,t_1)\mathcal{E}_{{\a_1}{\a_1}}(t_1,0)\, .
\end{align}
\end{proposition}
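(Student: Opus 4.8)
The plan is to read this Proposition as the precise way in which the composition law of Theorem~\ref{theorem1} breaks down once the invariant-subspace hypothesis is removed. In the proof of Theorem~\ref{theorem1} it was the vanishing of the leakage block $\E_{\a_1^{\t}\a_1}$ that collapsed $\Q_{\a_1\a_1}(t_2,t_1)$ to $\E_{\a_1\a_1}(t_2,0)\E_{\a_1\a_1}^{-1}(t_1,0)$ and hence delivered the clean relation~\eqref{eq-comp-alpha}. When $\L_{\a_1}$ is not invariant this block no longer vanishes, and my strategy is to compute the \emph{exact} discrepancy $\E_{\a_1\a_1}(t_2,0)-\Q_{\a_1\a_1}(t_2,t_1)\E_{\a_1\a_1}(t_1,0)$ and display it as a correction term assembled entirely from the surviving leakage blocks.

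First I would record the block-multiplication rule for superoperators, obtained by inserting $\mathcal{P}_{\a_1\a_1}+\mathcal{P}_{\a_1^{\t}\a_1^{\t}}=\mathcal{I}$: for any $A,B$,
\begin{align}
(AB)_{\a_1\a_1}=A_{\a_1\a_1}B_{\a_1\a_1}+A_{\a_1\a_1^{\t}}B_{\a_1^{\t}\a_1}\,.
\end{align}
Applying this to $\Q(t_2,t_1)=\E(t_2,0)\E^{-1}(t_1,0)$ expresses $\Q_{\a_1\a_1}(t_2,t_1)$ through the blocks of $\E(t_2,0)$ and of $\E^{-1}(t_1,0)$. I would then right-multiply by $\E_{\a_1\a_1}(t_1,0)$ and eliminate the diagonal inverse block using the $(\a_1,\a_1)$ component of $\E^{-1}(t_1,0)\E(t_1,0)=\mathcal{I}$, i.e. $\l[\E^{-1}(t_1,0)\r]_{\a_1\a_1}\E_{\a_1\a_1}(t_1,0)=\mathcal{I}_{\a_1}-\l[\E^{-1}(t_1,0)\r]_{\a_1\a_1^{\t}}\E_{\a_1^{\t}\a_1}(t_1,0)$. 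Collecting the remaining terms, the discrepancy should reduce to
\begin{align}
&\E_{\a_1\a_1}(t_2,0)-\Q_{\a_1\a_1}(t_2,t_1)\E_{\a_1\a_1}(t_1,0) \nonumber\\
&=\E_{\a_1\a_1}(t_2,0)\l[\E^{-1}(t_1,0)\r]_{\a_1\a_1^{\t}}\E_{\a_1^{\t}\a_1}(t_1,0) \nonumber\\
&\quad-\E_{\a_1\a_1^{\t}}(t_2,0)\l[\E^{-1}(t_1,0)\r]_{\a_1^{\t}\a_1}\E_{\a_1\a_1}(t_1,0)\,.
\end{align}

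Each term on the right carries a leakage factor, either $\E_{\a_1^{\t}\a_1}(t_1,0)$ or $\l[\E^{-1}(t_1,0)\r]_{\a_1^{\t}\a_1}$. I would then observe that when $\L_{\a_1}$ is invariant both factors vanish (the second because block upper-triangularity of $\E$ passes to $\E^{-1}$), so the discrepancy is zero and one recovers~\eqref{eq-comp-alpha}; this cross-check confirms consistency with Theorem~\ref{theorem1}. When $\L_{\a_1}$ is not invariant, $\E_{\a_1^{\t}\a_1}(t_1,0)\neq 0$ and the right-hand side is generically nonzero, which is exactly the content of~\eqref{eq-pro}.

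The main obstacle I anticipate is that this structural identity only shows the correction is \emph{typically} nonzero: the two terms could in principle cancel for finely tuned dynamics, so the formula alone does not rigorously exclude accidental equality. To close this gap I would supplement the derivation with an explicit minimal instance, for example the decay dynamics~\eqref{eq-decay-can} on a qubit with $\L_{\a_1}$ chosen as a non-invariant coherence subspace, and verify by direct computation that the displayed discrepancy is strictly nonzero there. A single such witness suffices to establish that equality fails in general, completing the proof.
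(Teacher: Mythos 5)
Your proposal is correct and follows essentially the same route as the paper: the paper simply inserts $\P_{\a_1 \a_1}+\P_{\a_1^{\perp} \a_1^{\perp}}$ into $\E(t_2,0)=\Q(t_2,t_1)\E(t_1,0)$ to obtain the single correction term $\Q_{\a_1 \a_1^{\perp}}(t_2,t_1)\E_{\a_1^{\perp} \a_1}(t_1,0)$, which is algebraically equivalent to your two-term expression once you use the $(\a_1^{\perp},\a_1)$ block of $\E^{-1}(t_1,0)\E(t_1,0)=\mathcal{I}$. The caveat you raise at the end --- that the identity only shows the correction is generically nonzero, so a concrete witness would be needed to exclude accidental cancellation --- applies equally to the paper's own proof, which likewise stops at $\E_{\a_1^{\perp}\a_1}(t_1,0)\neq 0$ and appeals to the ``in general'' qualifier in the statement.
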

\begin{proof}
From Eqs.~\eqref{eq-def-div} and~\eqref{eq-def-sub1},
\begin{align}
\mathcal{E}_{{\a_1}{\a_1}}(t_2,0)=\mathcal{P}_{{\a_1}}\mathcal{Q}(t_2,t_1)(\mathcal{P}_{{\a_1}}+\mathcal{P}_{{\L}_{{\a_1}}^{\perp}})\mathcal{E}(t_1,0)\mathcal{P}_{{\a_1}}\, .
\end{align}
Therefore,
\begin{align}\label{Eq:prop1 map}
\mathcal{E}_{{\a_1}{\a_1}}(t_2,0)= \mathcal{Q}_{{\a_1}{\a_1}}(t_2,t_1)\mathcal{E}_{{\a_1}{\a_1}}(t_1,0)+ \mathcal{Q}_{{\a_1} {\a_1}^{\perp}}(t_2,t_1)\mathcal{E}_{{\a_1}^{\perp} {\a_1}}(t_1,0)\, .
\end{align}
Because $\mathcal{E}_{{\a_1}{\a_1}}(t_1,0)$ is not trace-preserving, $\mathcal{E}_{{\a_1}^{\perp}{\a_1}}(t_1,0)\neq 0$. Therefore, one has Eq.~\eqref{eq-pro}.
\end{proof}
	
	
The proposition reveals that when ${\L}_{{\a_1}}$ is not an invariant subspace of $\E(t,0)$, $\E_{\a_1 \a_1}(t,0)$ in general does not obey the composition relation in Eq.~\eqref{eq-comp-alpha}. In this case, the divisibility structure similar to that in the trace-preserving case does not exist. 

\emph{Conclusion.}---By decomposing open quantum systems with a new approach, i.e., the direct sum decomposition, we study the characteristics of the subsystem dynamics and their relationship with the dynamics of the original system. It is shown that 
if a chain of invariant subspaces exists, then the divisibility conditions form a hierarchy, offering us a systematic way of obtaining divisibility witnesses. With this approach, we study two classes of dynamics, i.e., the pure-dephasing processes and the decay processes. 

As divisibility is related to memory effects in open quantum systems, our results offer a systematic way of obtaining non-Markovianity witnesses. Moreover, as a new approach of decomposing the dynamics of open quantum systems, it is worthy of further study. 

We thank Wei-Min Zhang, Yi-Zheng Zhen, Zhe-Yan Wan and Wen-Fei Cao for helpful discussions. F.L.X. and Z.B.C. were supported by the National Natural Science Foundation of China (Grant No. 61125502) and the CAS. 

\appendix

\begin{widetext}

	\section{Lemmas required in the body of the text}
	\newtheorem{lemma}{Lemma}
	\begin{lemma}\label{lemma1}
		If $\E$ is CP, then $\mathcal{E}_{\a \a}=\P_{\a \a}\E \P_{\a \a}$ is also CP.
	\end{lemma}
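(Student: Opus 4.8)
The plan is to prove complete positivity directly from the operator-sum (Kraus) representation of $\E$. Since $\E$ is CP, there exist Kraus operators $\{K_k\}$ on $\H_{\S}$ such that $\E(X)=\sum_k K_k X K_k^{\dagger}$ for every $X\in\L_{\S}$. The essential first step is to recognize that the superoperator projection $\P_{\a\a}$ onto $\L_{\a}$ is nothing but conjugation by the \emph{Hilbert-space} projector $\Pi_{\a}$ onto $\H_{\a}\subset\H_{\S}$; that is, $\P_{\a\a}(X)=\Pi_{\a}X\Pi_{\a}$. In block form $\Pi_{\a}=\mathrm{diag}(\mathbb{I},0)$, so $\Pi_{\a}X\Pi_{\a}$ retains only the top-left block $X_{\a\a}$ and annihilates the rest, which is exactly the action of $\P_{\a\a}$ described in the main text.

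With this identification, I would simply substitute and collect terms:
\begin{align}
\E_{\a\a}(X)=\P_{\a\a}\E\P_{\a\a}(X)=\Pi_{\a}\Big(\sum_k K_k\,\Pi_{\a}X\Pi_{\a}\,K_k^{\dagger}\Big)\Pi_{\a}=\sum_k M_k X M_k^{\dagger}\,,
\end{align}
where $M_k:=\Pi_{\a}K_k\Pi_{\a}$. This exhibits $\E_{\a\a}$ itself in operator-sum form, and any map admitting such a representation is CP, which is the desired conclusion.

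An equivalent, slightly more structural way to organize the argument is to observe that $\P_{\a\a}$ is \emph{itself} CP, being conjugation by the single operator $\Pi_{\a}$ (a one-Kraus-operator map), and that the composition of CP maps is again CP because $(\Phi\circ\Psi)\otimes\mathrm{id}=(\Phi\otimes\mathrm{id})\circ(\Psi\otimes\mathrm{id})$ sends positive operators to positive operators through two successive positive maps. Then $\E_{\a\a}=\P_{\a\a}\circ\E\circ\P_{\a\a}$ is manifestly a composition of three CP maps, hence CP.

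The argument is short, so I do not anticipate a serious obstacle; the only point that needs care is the notational distinction between the operator-space projector $\P_{\a\a}$ (acting on $\L_{\S}$) and the Hilbert-space projector $\Pi_{\a}$ (acting on $\H_{\S}$), together with the verification that $\P_{\a\a}(X)=\Pi_{\a}X\Pi_{\a}$. Once this correspondence is established, CP-invariance follows immediately from either the explicit Kraus form or the composition-of-CP-maps principle, and no separate check of positivity on tensor extensions $\E_{\a\a}\otimes\mathrm{id}_n$ is needed.
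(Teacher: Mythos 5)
Your proof is correct, but it takes a different route from the paper's. The paper does not invoke a Kraus representation at all: it works directly with the ancilla characterization of complete positivity, writing $\E_{\a\a}\otimes\P_{\A\A}=\l(\P_{\a\a}\otimes\P_{\A\A}\r)\l(\E\otimes\P_{\A\A}\r)\l(\P_{\a\a}\otimes\P_{\A\A}\r)$ and noting that on a positive $X_{\a\text{-}\A}$ the right-hand factor acts as the identity, the middle factor gives a positive operator because $\E$ is CP, and the left-hand factor is a positive map; hence the output is positive for every ancilla. Your second, ``composition of CP maps'' observation is essentially this same argument in abstract form, so that part is equivalent to the paper's. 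Your primary argument via $M_k=\Pi_{\a}K_k\Pi_{\a}$ is cleaner and more explicit (it even hands you Kraus operators for $\E_{\a\a}$), and your identification $\P_{\a\a}(X)=\Pi_{\a}X\Pi_{\a}$ is exactly right. The one thing the paper's route buys is that it never needs the existence of an operator-sum decomposition: since the paper allows $\operatorname{dim}(\H_{\a_2})=\infty$, a Kraus representation of $\E$ technically requires normality (or some continuity) of the map, whereas the positivity-of-extensions argument needs only that $\P_{\a\a}\otimes\P_{\A\A}$ is a positive map. For the dynamical maps considered here this is not a real obstruction, so both proofs are acceptable.
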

	\begin{proof}
		Assume that there exists an ancillary $\A$, which corresponds to a space $\L_{\A}=\H_{\A}\times\H_{\A}$, with $\H_{\A}$ denoting its Hilbert space. The superoperator $\E$ ($\E_{\a \a}$) is CP, if and only if $\forall \A$ and $X_{\S\-\A}\geq0$ ($X_{\a\-\A}\geq0$), $\E \otimes \P_{\A \A}\l[X_{\S\-\A}\r]\geq 0$ ($\E_{\a \a} \otimes \P_{\A \A}\l[X_{\a\-\A}\r]\geq 0$), where $X_{\S\-\A}\in \L_{\S}\otimes \L_\A$ ($X_{\a\-\A}\in \L_{\a}\otimes \L_\A$), and $\P_{\A \A}$ is the projection superoperator of $\L_\A$~\cite{LHW17}. 
		
		Because $\mathcal{E}_{\a \a}=\P_{\a \a} \mathcal{E}_{\rm S} \P_{\a \a}$,
		\begin{align} 
		\E_{\a \a} \otimes \P_{\A \A}=\l(\P_{\a \a}\otimes\P_{\A \A}\r)\l(\E \otimes\P_{\A \A}\r)\l(\P_{\a \a}\otimes\P_{\A \A}\r)\, .
		\end{align}
		Thus, $\forall X_{\a\-\A}\in \L_{\a\-\A}$, 
		\begin{align} 
		\l(\E_{\a \a} \otimes \P_{\A \A}\r)\l[X_{\a\-\A}\r]=\l(\P_{\a \a}\otimes\P_{\A \A}\r)\l(\E \otimes\P_{\A \A}\r)\l[X_{\a\-\A}\r]\, .
		\end{align}
		Because $\E$ is CP, $\E \otimes\P_{\A \A}$ is a positive map. That is, $\forall\, X_{\a\-\A}\geq 0 $, $\l(\E \otimes\P_{\A \A}\r)[X_{\a\-\A}]\geq 0$. Moreover, $\P_{\a \a}\otimes\P_{\A \A}$ is also a positive map, so
		\begin{align}
		\l(\E_{\a \a} \otimes \P_{\A \A}\r)\l[X_{\a\-\A}\r]=\l(\P_{\a \a}\otimes\P_{\A \A}\r)\l(\E \otimes\P_{\A \A}\r)\l[X_{\a\-\A}\r]\geq0\, .
		\end{align}
		That is, $\E_{\a \a}$ is CP.
	\end{proof}
	
	\begin{lemma}\label{lemma2}
		All $\E_{\a \a}(t,0)=\P_{\a \a}\E(t,0) \P_{\a \a}$ are CPTP maps if and only if $\L_{\a}$ is $\E(t,0)$-invariant for $t\geq0$.
	\end{lemma}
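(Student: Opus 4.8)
The plan is to peel off the trace-preservation content of the claim, because complete positivity of $\E_{\a\a}(t,0)$ comes for free: Lemma~\ref{lemma1} already guarantees that $\E_{\a\a}(t,0)=\P_{\a\a}\E(t,0)\P_{\a\a}$ is CP whenever $\E(t,0)$ is. Hence ``$\E_{\a\a}(t,0)$ is CPTP'' is equivalent to ``$\E_{\a\a}(t,0)$ is trace-preserving,'' and the whole lemma reduces to proving that trace preservation of the $\a\a$-block map is equivalent to invariance of $\L_{\a}$. Throughout I write $\H_{\S}=\H_{\a}\oplus\H_{\b}$ and let $\P_{\b\b}$ denote the projection onto $\L_{\b}$.

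First I would record the basic trace identity. For every $X_{\a\a}\in\L_{\a}$, only the diagonal blocks contribute to the trace, so $\operatorname{Tr}(\E X_{\a\a})=\operatorname{Tr}(\P_{\a\a}\E X_{\a\a})+\operatorname{Tr}(\P_{\b\b}\E X_{\a\a})$, while $\P_{\a\a}\E X_{\a\a}=\E_{\a\a}X_{\a\a}$ because $\P_{\a\a}X_{\a\a}=X_{\a\a}$. Since $\E(t,0)$ is itself trace-preserving, $\operatorname{Tr}(\E X_{\a\a})=\operatorname{Tr}(X_{\a\a})$, giving
\begin{align}
\operatorname{Tr}\l(\E_{\a\a}X_{\a\a}\r)=\operatorname{Tr}\l(X_{\a\a}\r)-\operatorname{Tr}\l(\P_{\b\b}\E X_{\a\a}\r).
\end{align}
Thus $\E_{\a\a}$ is trace-preserving if and only if $\operatorname{Tr}(\P_{\b\b}\E X_{\a\a})=0$ for all $X_{\a\a}\in\L_{\a}$. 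The ``if'' direction is then immediate: if $\L_{\a}$ is invariant, $\E X_{\a\a}\in\L_{\a}$, so every block of $\E X_{\a\a}$ except the $\a\a$ one vanishes, the trace condition holds, and $\E_{\a\a}$ is CPTP.

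The ``only if'' direction is the substantive part and the step I expect to be the main obstacle. Assuming $\E_{\a\a}$ is trace-preserving, the identity above only asserts that the \emph{trace} of the $\b\b$-block of $\E X_{\a\a}$ vanishes, whereas invariance demands that the full $\b\b$-block \emph{and} both off-diagonal blocks of $\E X_{\a\a}$ vanish. I would upgrade the trace statement to full block vanishing using positivity. Take $X_{\a\a}=\ketbrad{\psi}$ with $\ket{\psi}\in\H_{\a}$; then $\E\ketbrad{\psi}\geq0$ since $\E$ is positive, and Hermitian. Its $\b\b$-block $\P_{\b\b}\E\ketbrad{\psi}$ is a principal submatrix of a positive-semidefinite operator, hence positive-semidefinite, and by the identity it has zero trace, so it vanishes. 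For a positive-semidefinite $M$, $\bra{e}M\ket{e}=0$ for every $\ket{e}\in\H_{\b}$ forces $M\ket{e}=0$, which annihilates the columns indexed by $\H_{\b}$ and therefore kills the $\a\b$-block as well; Hermiticity then kills the $\b\a$-block. Consequently $\E\ketbrad{\psi}\in\L_{\a}$. Finally, since the rank-one projectors $\ketbrad{\psi}$ ($\ket{\psi}\in\H_{\a}$) span $\L_{\a}$ over $\mathbb{C}$, linearity of $\E$ yields $\E X_{\a\a}\in\L_{\a}$ for all $X_{\a\a}$, i.e.\ $\L_{\a}$ is $\E(t,0)$-invariant. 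Carrying this argument out at each fixed $t\geq0$ gives the stated equivalence for $t\geq0$.
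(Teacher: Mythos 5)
Your proof is correct and follows essentially the same route as the paper's: the forward direction is immediate from invariance, and the backward direction rests on the fact that positive operators (your rank-one projectors, the paper's density matrices) span $\L_{\a}$, so that a single state escaping $\L_{\a}$ already violates trace preservation. The only difference is one of rigor: where the paper's proof by contradiction simply asserts ``Consequently, $\E_{\a\a}(t,0)$ cannot be trace-preserving,'' you supply the justification that the paper leaves implicit---namely that a positive-semidefinite image with traceless $\b\b$-block must have vanishing $\b\b$- and off-diagonal blocks, so escaping $\L_{\a}$ forces a strictly positive trace loss into the $\b\b$-block.
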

	\begin{proof}
		If $\L_{\a}$ is an invariant subspace of $\E(t,0)$, for $\rho_{\S}(0)=\rho_{\a\a}(0)$,
		\begin{align}
		\rho_{\S}(t)=\E(t,0)\rho_{\S}(0)\in\L_{\a}\, .
		\end{align} 
		Thus,
		\begin{align}
		\rho_{\a\a}(t)=\E_{\a\a}(t,0)\rho_{\a\a}(0)=\rho_{\S}(t)\,.
		\end{align}
		That is, $\E_{\a\a}(t,0)$ is trace-preserving. As a result, $\E_{\a\a}(t,0)$ are CPTP maps. 
		
		Suppose all $\E_{\a \a}(t,0)$ are CPTP maps, but simultaneously $\L_{\a}$ is not $\E(t,0)$-invariant. By definition, there exist $X_{\a\a}$ and $t$ satisfying that $\E(t,0)X_{\a\a} \notin \L_{\a}$.
		However, if for all legitimate density matrix $\rho_{\a\a}(0)$,
		$\E(t,0)\rho_{\a\a}(0) \in \L_{\a}$,
		then $\forall \, X_{\a\a}\in\L_{\a}$,
		$\E(t,0)X_{\a\a} \in \L_{\a}$. 
		Therefore, there must exist some $\rho_{\a\a}(0)$ satisfying $\E(t,0)\rho_{\a\a}(0)\notin\L_{\a}$. Consequently, $\E_{\a\a}(t,0)$ cannot be trace-preserving, which contradicts the assumption that $\E_{\a \a}(t,0)$ are CPTP maps. 
	\end{proof}

Now we prove that under the conditions $\mathcal{E}_{\a_1^{\perp}\a_1}=0$ and $\operatorname{dim}(\L_{\a_1 })<\infty$, the existence of $\E^{-1}$ implies the existence of $\mathcal{E}_{\a_1\a_1}^{-1}$ and $\mathcal{E}_{\a_1^{\perp}\a_1^{\perp}}^{-1}$. For simplicity of notation, we define $\bm{A}=\mathcal{E}_{\a_1\a_1}$, $\bm{B}=\mathcal{E}_{\a_1^{\perp}\a_1^{\perp}}$ and $\bm{C}=\mathcal{E}_{\a_1 \a_1^{\perp}}$. Then $\E = \left(\begin{array}{c c}
	\bm{A} & \bm{C} \\
	\bm{0} & \bm{B} \\
	\end{array}\right)$. Accordingly, we express $\E ^{-1}$ also in the form of block matrix that $\E^{-1}=\left(\begin{array}{c c}
	\bm{D} & \bm{F} \\
	\bm{G} & \bm{E} \\
	\end{array}\right)$, where $\bm{D}=\left(\E^{-1} \right)_{\a_1 \a_1} $, $\bm{E}=\left(\E^{-1} \right)_{\a_1^{\perp} \a_1^\perp} $, etc. 
	By definition, we have
	\begin{align}
	\left(\begin{array}{c c}
	\bm{D} & \bm{F} \\
	\bm{G} & \bm{E} \\
	\end{array}\right)
	\left(\begin{array}{c c}
	\bm{A} & \bm{C} \\
	\bm{0} & \bm{B} \\
	\end{array}\right) & = \left(\begin{array}{c c}
	\bm{DA} & \bm{DC+FB} \\
	\bm{GA} & \bm{GC+EB} \\
	\end{array}\right) =\left(\begin{array}{c c}
	\bm{I} & \bm{0} \\
	\bm{0} & \bm{I} \\
	\end{array}\right)\,. \label{eq:lem42}\\
	\left(\begin{array}{c c}
	\bm{A} & \bm{C} \\
	\bm{0} & \bm{B} \\
	\end{array}\right) \left(\begin{array}{c c}
	\bm{D} & \bm{F} \\
	\bm{G} & \bm{E} \\
	\end{array}\right) &= \left(\begin{array}{c c}
	\bm{AD+CG} & \bm{AF+CE} \\
	\bm{BG} & \bm{BE} \\
	\end{array}\right) =\left(\begin{array}{c c}
	\bm{I} & \bm{0} \\
	\bm{0} & \bm{I} \\
	\end{array}\right)\,,\label{eq:lem41}  
	\end{align}
	According to Eq.~\eqref{eq:lem42}, $\bm{DA}=\bm{I}$. Because $\operatorname{dim}(\L_{\a_1 })<\infty$, $\bm{A}^{-1}=\bm{D}$. Through Eq.~\eqref{eq:lem42}, one also has $\bm{GA}=\bm{0}$. Multiplying both sides by $\bm{A}^{-1}$ on the right, one obtains $\bm{G}=\bm{0}$. Then through both Eqs.~\eqref{eq:lem42} and~\eqref{eq:lem41}, one can prove $\bm{BE}=\bm{EB}=\bm{I}$. Consequently, $\bm{B}^{-1}=\bm{E}$. Therefore, $\E_{\a_1 \a_1}^{-1}=\left(\E^{-1} \right)_{\a_1 \a_1}$ and $\E_{\a_1^\perp \a_1^\perp}^{-1}=\left(\E^{-1} \right)_{\a_1^\perp \a_1^\perp}$.
	Notice that in the above proof, $\operatorname{dim}\left(\L_{\rm S}\right)$ is not necessarily finite. Therefore, the theorem in the body text is applicable for infinite level systems under the condition that $\operatorname{dim}(\L_{\a_1})<\infty$. 
	
	\section{canonical form of master equations}
	Time-local master equations for finite-level quantum systems can all be expressed in the form
	\begin{align}\label{eq-time-loc}
	\dot{\rho} =\sum_k A_k(t) \rho  B_k^{\dagger}(t)\,,
	\end{align}
	where $A_k(t)$ and $B_k(t)$ denote operators in the system Hilbert space. For simplicity, we shall suppress their time dependence
	. Equation~\eqref{eq-time-loc} can be transformed to the form
	\begin{align}\label{eq:subcan}
	\dot{\rho} =-\mathrm{i}[H', \rho]+\sum_{p,q=1}^{n^2-1} d_{pq} \l(G_p \rho  G_q-\frac{1}{2}\l\{G_q G_p, \rho \r\}\r)\,,
	\end{align}
	where $H'=H'^{\dagger}$; $n$ is the dimension of the system; $\{G_p: p\in\{0,1,\cdots, n^2-1\}\}$ satisfies
	\begin{align}\label{eq:G_n}
	G_0=\frac{\bm{I}_n}{\sqrt{n}}\,, \: \: \: G_p=G_p^{\dagger}\,,\: \: \: \operatorname{Tr}\l[G_p G_q\r]=\delta_{pq}\,;
	\end{align}
	the elements
	\begin{align}\label{eq:deco}
	d_{pq} =\sum_k \operatorname{Tr}\l[G_p A_k\r] \operatorname{Tr}\l[G_q B_k^{\dagger}\r]\,.
	\end{align}
	form the \emph{decoherence matrix} $\bm{d} $ associated to the dynamics~\cite{HCLA14}.
	The decoherence matrix $\bm{d}$ is Hermitian, thus diagonalizable. Assume that $\operatorname{diag}\l\{\gamma_1,\gamma_2,\cdots,\gamma_{n^2-1}\r\}=U^{\dagger}(t) \bm{d}U(t) $, where $U(t)$ is a unitary operator. Then Eq.~\eqref{eq:subcan} can further be transformed to the \emph{canonical form}
	\begin{align}\label{eq:can}
	\dot{\rho} =-\mathrm{i}[H', \rho ]+\sum_{p=1}^{n^2-1} \gamma_p\l(L_p(t) \rho  L_p^{\dagger}(t)-\frac{1}{2}\l\{L_p^{\dagger}(t) L_p(t), \rho \r\}\r)\,,
	\end{align}
	where $L_p(t)=\sum_{q=1}^{n^2-1} U_{qp} G_q$. The $\{L_p(t): p\in\{1,2,\cdots,n^2-1\}\}$ form an orthonormal basis of traceless operators, i.e.,
	\begin{align}\label{eq:L_n}
	\operatorname{Tr}\l[L_p(t)\r]=0\,;\;\;\;\;\operatorname{Tr}\l[L_p(t)L_q(t)\r]=\delta_{pq}\,.
	\end{align}
	The dynamics is divisible if and only if $\gamma_p\geq0\; (\forall p \in \{1,2,\cdots,n^2-1\})$, 
	or equivalently, $\bm{d}\geq 0$~\cite{HCLA14}.

	\section{boson-boson pure-dephasing model and the decoherence matrices}
	
	In Ref.~\cite{XLC18}, we propose a boson-boson pure-dephasing model whose Hamiltonian reads
	\begin{align} 
	H=\omega_0 b^{\dagger}b+\sum _n \omega _n  b_n^{\dagger } b_n+\sum _n \lambda_n b^{\dagger}b b_n^{\dagger} b_n\, ,
	\end{align}
	where $\omega_0$, $b^{\dagger}$ and $b$ are the single particle energy, creation operator and annihilation operator of the bosonic system mode (S), respectively; $\omega _n$, $b_n^{\dagger }$ and $b_n$ are the single particle energy, creation operator and annihilation operator of the $k$th mode of the bosonic bath (B); $\lambda_n$ is the coupling strength between the system and the $k$th mode of the bath. 
	Suppose the initial state of the total system reads $\rho(0)=\rho_{\S}(0)\otimes e^{- H_{\rm B}/T } /\operatorname{Tr} \l\{ e^{- H_{\rm B}/T } \r\}$, where $\rho_{\S}(0)$ is the initial state of the system; 
	$H_{\rm B}=\sum _n \omega _n  b_n^{\dagger } b_n$ is the Hamiltonian of the bath; and $T$ is the temperature of the bath.  In the interaction picture, the equation of motion of the system can be formulated as
	\begin{align}\label{eq-dephasing}
	\dot{\rho}_{\S}=\sum_{m,n=0}^{\infty} \dot{\eta}((m-n)t)\ketbra{m}{m} \rho_{\S} \ketbra{n}{n}\, ,
	\end{align}
	where 
	\begin{align}
	\eta(t)=\sum_n {\ln\left(\frac{1-\operatorname{exp}\{- {\omega_n}/T\}}{1-\operatorname{exp}\{-{\omega_n}/T-i \lambda_n t\}}\right)}\, ,
	\end{align}
	and $\ket{m}$ and $\ket{n}$ stand for the number states of the system.

	If $\rho_{\rm S}(0)\in \L_{k}$, where ${\L}_k=\operatorname{span}\l\{\ketbra{m}{n}|0\leq m,n\leq k-1\r\}$, then the dynamics reduces to the pure-dephasing of a $k$-level system. Denote the density matrix as $\rho_k(t)$, then the equation of motion reads
	\begin{align}\label{eq-dep-n}
	\dot{\rho}_k=\sum_{m,n=0}^{k-1} \dot{\eta}((m-n)t)\ketbra{m}{m} \rho_k \ketbra{n}{n}\, ,
	\end{align}
	Following Eqs.~\eqref{eq-time-loc} and \eqref{eq:deco}, elements of the decoherence matrix ${\bm d}_k (t)$ satisfy
	\begin{align}
	d_{pq}(t)=\sum_{m,n=0}^{k-1} \dot{\eta}((m-n)t) \bra{m} G_p \ket{m} \bra{n} G_q \ket{n}\,.
	\end{align}
	Choose the following representation of $G_p$ that
	\begin{align}
	\begin{split}
	G_l^{\rm d}&=\frac{\operatorname{diag}\{1,\cdots,1,-l,0,\cdots,0\}}{\sqrt{l(l+1)}}   \qquad (1\leq l\leq k-1) \\
	G_{mn}^{\rm s}&=\frac{1}{\sqrt{2}}(\ketbra{m}{n}+\ketbra{n}{m})   \qquad\quad (0\leq n < m\leq k-1) \\
	G_{mn}^{\rm a}&=\frac{i}{\sqrt{2}}(\ketbra{m}{n}-\ketbra{n}{m})  \qquad\quad (0\leq n < m\leq k-1)
	\end{split}
	\end{align}
	where the superscripts stand for ``diagonal'', ``symmetric'', and ``anti-symmetric'', respectively.  
	Sort $G_l^{\rm d}$ by $l$, and $G_{mn}^{\rm s}$ ($G_{mn}^{\rm a}$) by $m$ and $n$ in the ascending order, respectively. Then in the decoherence matrix $\bm{d}_k$, only the upper-left $(k-1) \times (k-1)$ block $\bm{d}^{\rm B}_k$ is non-trivial.
	Explicitly, it reads
	\begin{align}\label{eq-d_k^B}
	\bm{d}^{\rm B}_k=\tilde{\bm G}_k {\dot{\bm \eta}}_k \tilde{\bm G}_k^{\dagger}\,,
	\end{align}
	where 
	\begin{align}
	{\dot{\bm \eta}}_k= \left(\begin{array}{ccccc}
	0& \dot{\eta}(-t) & \dot{\eta}(-2t) &\cdots& \dot{\eta}((1-k)t)\\
	\dot{\eta}(t)& 0&\dot{\eta}(-t) &\ddots &\vdots\\
	\dot{\eta}(2t)& \dot{\eta}(t)&0 &\ddots &\dot{\eta}(-2t)\\
	\vdots& \ddots&\ddots &\ddots &\dot{\eta}(-t)\\
	\dot{\eta}((k-1)t)& \cdots&\dot{\eta}(2t) &\dot{\eta}(t) &0\\
	\end{array}\right)\,,
	\end{align}
	and
	\begin{align}
	\tilde{\bm G}_k=\left(\begin{array}{cccccc}
	c_1& h_1 & 0 &\cdots& 0\\
	c_2& c_2&h_2 &\ddots &\vdots \\
	\vdots& \ddots&\ddots &\ddots &0 \\
	c_{k-1}& \cdots &\cdots & c_{k-1} & h_{k-1} \\
	\end{array}\right)\,,
	\end{align}
	with $c_{k-1}=\frac{1}{\sqrt{(k-1)k}}$ and $h_{k-1}=-\frac{k-1}{\sqrt{(k-1)k}}$. $\tilde{\bm G}_k$ can be further decomposed as 
	\begin{align}
	\tilde{\bm G}_k=\left(\begin{array}{cccc}
	c_1&       &        &      \\
	&  c_2  &        &      \\
	&       & \ddots &      \\
	&       &        &c_{k-1}  
	\end{array}\right) 
	\left(\begin{array}{cccc}
	1  &       &       &       \\
	1  &    1  &       &       \\
	\vdots&\ddots &\ddots &       \\
	1  & \cdots& 1     &   1  
	\end{array}\right)
	\left(\begin{array}{cccc}
	1   &       &        &      \\
	&  2    &        &      \\
	&       & \ddots &      \\
	&       &        &    {k-1}  
	\end{array}\right) 
	\left(\begin{array}{ccccc}
	1   &   -1  &        &      &     \\
	&    1  & \ddots &      &     \\
	&       & \ddots &   -1 &     \\
	&       &        &    1 &   -1
	\end{array}\right)\,.
	\end{align}
	Therefore, Eq.~\eqref{eq-d_k^B} can be transformed to
	\begin{align}
	\bm{d}^{\rm B}_k=V_k \bm{D}_k V_k^{\dagger}\,,
	\end{align}
	where
	\begin{align}
	V_k=\left(\begin{array}{cccc}
	c_1&       &        &      \\
	&  c_2  &        &      \\
	&       & \ddots &      \\
	&       &        &c_{k-1}  
	\end{array}\right) 
	\left(\begin{array}{cccc}
	1  &       &       &       \\
	1  &    1  &       &       \\
	\vdots&\ddots &\ddots &       \\
	1  & \cdots& 1     &   1  
	\end{array}\right)
	\left(\begin{array}{cccc}
	1   &       &        &      \\
	&  2    &        &      \\
	&       & \ddots &      \\
	&       &        &    {k-1}  
	\end{array}\right) \,,
	\end{align}
	and 
	\begin{align}\label{}
	\bm{D}_k=\left(\begin{array}{ccccc}
	T_0& T_{-1} & T_{-2} &\cdots& T_{2-k}\\
	T_1& T_0&T_{-1} &\ddots &\vdots\\
	T_2& T_1&T_{0} &\ddots &T_{-2}\\
	\vdots& \ddots&\ddots &\ddots &T_{-1}\\
	T_{k-2}& \cdots&T_2 &T_1 &T_0\\
	\end{array}\right),
	\end{align}
	with $T_j=-\dot{\eta}((j+1)t)+2\dot{\eta}(j t)-\dot{\eta}((j-1)t)$.

\end{widetext}

\bibliography{references}
	
\end{document}